	\newtheorem{theorem}{Theorem}
	\newtheorem{lemma}[theorem]{Lemma}
	\newtheorem{corollary}[theorem]{Corollary}
	\numberwithin{theorem}{section}
	\theoremstyle{definition}
	\newtheorem{question}{Question}
	\def\R{{\mathbb{R}}}
	\def\Z{{\mathbb{Z}}}
	\def\N{{\mathbb{N}}}
	\def\P{{\mathbb{P}}}
	\newcommand{\1}{\mathds{1}}
	\def\omg{\omega}
    \def\Omg{\Omega}
	\def\eps{\varepsilon}
	\def\p{\partial}
\def\T{\mathbb{T}}
\def\inter{\mathrm{int}}
\def\exter{\mathrm{ext}}
\renewcommand{\p}{\partial}
\title{Short, Quantitative Construction of the IIC in Planar Percolation}
\author{Malo Hillairet \thanks{Institut Fourier, Université Grenoble Alpes.}}
\date{}
\begin{document}

\maketitle

\abstract{The classical definitions of the Incipient Infinite Cluster (IIC) of percolation consist in conditioning the origin on being connected to radius $n$ and letting $n$ go to infinity. We provide a short proof of that convergence in the planar setting. A key step in the proof is to introduce an unbiased percolation configuration above which are coupled two percolations conditioned on $0$ being connected to different radii. It implies that the speed of convergence in total variation distance to the IIC measure is upper-bounded by the dual one-arm probability, which is the first occurrence of an explicit upper-bound. Additionally, the proof can be generalised widely to planar graphs. For example, under the assumption that there is no infinite dual cluster at criticality, it proves existence of the IIC on any planar triangulation.}
\section{Introduction}

Bernoulli percolation has been introduced in 1957 by Broadbent and Hammersley \cite{bh57} and refers to models where, in a graph, vertices or edges are independently declared `open' with some probability $p$, and the objects of interest are the macroscopic connectivity properties of the open subgraph. In general percolation models on planar graphs with Euclidean geometry such as bond percolation on $\Z^2$, it is well-established that there is a phase transition from parameters $p < p_c$, at which there are no macroscopic open connected component, to parameters $p > p_c$, at which there is a connected component of positive density. At the critical parameter $p_c \in (0, 1)$, there is no infinite connected component, but there are macroscopic connected components with a rich geometry. In this short article, we provide a new proof of existence of the so-called Incipient Infinite Cluster (IIC) in critical planar Bernoulli percolation models. Although this object had been introduced before, it is Kesten \cite{ke86} who rigorously established its existence on graphs similar to $\Z^2$ as the limit of the sequence of measures
\begin{equation} \label{eqn:convergence}
\lim_{n \rightarrow + \infty} \mathbf{P}_{p_c} \big( \cdot | \, 0 \text{ is connected to distance $n$}\big)
\end{equation}
where $\mathbf{P}_p$ denotes Bernoulli percolation at parameter $p$ and $p_c$ is the critical parameter of the model, and also as
\[
\lim_{\underset{p > p_c}{p \rightarrow p_c}} \mathbf{P}_p \big( \cdot | \, 0 \leftrightarrow \infty\big) \, ,
\]
where $0 \leftrightarrow \infty$ means there is an infinite connected component containing $0$. The IIC thus corresponds to the percolation measure conditioned on $0$ being in an infinite connected component at $p_c$, which is a relevant object in view of understanding better the macroscopic connected components at criticality. There are other equivalent definitions of the IIC, for instance in the works of J\'{a}rai \cite{jar03} and Borgs, Chayes, Kesten, Spencer \cite{bcks01} where the IIC is constructed by considering the largest cluster in the box of size $n$ around the origin and translating it so that it contains $0$. It was also shown by Hammond, Pete and Schramm \cite{hps15} that dynamical percolation observed at a typical exceptional time has the distribution of the IIC (dynamical percolation is the continuous-time process where the states of sites in critical percolation are resampled at exponential rates, and exceptional times are times at which $0$ is connected to infinity by an open path). For background on dynamical percolation and exceptional times, we refer to a survey by Steif \cite{st09}. A natural question left is the speed of convergence in (\ref{eqn:convergence}). In this article, we provide a fairly simple proof of convergence to the IIC which can be applied generally in planar settings and improves on pre-existing upper-bounds on the speed of convergence in some cases (including the triangular lattice).

\subsection{Previously known speeds of convergence} Consider the model of face percolation on the hexagonal lattice, that is, consisting in regular hexagons paving the plane. We independently color the hexagons in black with probability $\frac{1}{2}$ and in white otherwise (as $\frac{1}{2}$ is the critical parameter in this model by Kesten's theorem \cite{ke80}). We denote by $\mathbf{P} = \mathbf{P}_{1/2}$ the associated probability distribution. For $n \in \N$ we write $\Lambda_n$ for the set of hexagons at distance at most $n$ to $0$, and $\partial \Lambda_n$ for its outer boundary $\Lambda_{n+1} \backslash \Lambda_n$. We denote by $\Omg_{\Lambda_n}$ the set $\{0, 1\}^{\Lambda_n}$ of percolation configurations on $\Lambda_n$. We let $\{0 \leftrightarrow \partial \Lambda_n\} \subseteq \Omg_{\Lambda_n}$ be the event that there is a path of black hexagons, in $\Lambda_n$, starting at a neighbor of $0$ and ending at an hexagon adjacent to $\partial \Lambda_n$. We call $\{0 \leftrightarrow \partial \Lambda_n\}$ the \textbf{one-arm event} to radius $n$. Kesten's initial method for constructing the IIC is called the transfer matrix method. Although not explicitly stated in \cite{ke86}, it implies the following speed of convergence in total variation distance, of the sequence of measures $\mathbf{P}(\cdot|_{\Lambda_k} | 0 \leftrightarrow \partial \Lambda_n), n \in \N$.

\begin{theorem}[\cite{ke86}, Transfer-matrix-method speed of convergence] \label{kestenspeed}
For all $\eta \in (0, \frac{1}{2})$, for all $k \leq m \leq n \in \N$ such that $\frac{m}{k}$ is large enough,
\[
\sup_{E \subseteq \Omg_{\Lambda_k}} \Big|\mathbf{P} \big(E \, | \, 0 \leftrightarrow \partial \Lambda_m\big) - \mathbf{P} \big(E \, | \, 0 \leftrightarrow \partial \Lambda_n \big) \Big| \leq \exp \Big(- \log(m / k)^{\frac{1}{2} - \eta} \Big) \, ,
\]
where the supremum is taken over events $E$ depending only on hexagons in $\Lambda_k$.
\end{theorem}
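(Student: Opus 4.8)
The plan is to first reduce the total variation distance in the statement to a question about one-arm probabilities in an annulus, and then to control those through Kesten's transfer-matrix recursion over dyadic annuli. For the reduction, note that $\bold{P}(\cdot \mid 0 \leftrightarrow \partial\Lambda_m)$ and $\bold{P}(\cdot \mid 0 \leftrightarrow \partial\Lambda_n)$, restricted to $\Omg_{\Lambda_k}$, are probability measures, so the supremum in the statement is their total variation distance. For $\omega_0 \in \Omg_{\Lambda_k}$, the conditional probability $\bold{P}(0 \leftrightarrow \partial\Lambda_\ell \mid \omega|_{\Lambda_k} = \omega_0)$ depends on $\omega_0$ only through the connectivity data $D(\omega_0)$ that $\omega_0$ induces on the boundary of $\Lambda_k$: the non-crossing partition of its open hexagons into the clusters of $\omega_0$ they belong to, with the cluster(s) of $0$ marked; indeed any black path realising $0 \leftrightarrow \partial\Lambda_\ell$ meets $\Lambda_k$ along sub-paths whose admissibility is read off from $D(\omega_0)$. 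Writing $g_\ell(D)$ for this conditional probability and $q_\ell := \bold{P}(0 \leftrightarrow \partial\Lambda_\ell)$, Bayes' formula gives $\bold{P}(\omega_0 \mid 0 \leftrightarrow \partial\Lambda_\ell) = \bold{P}(\omega_0)\, g_\ell(D(\omega_0)) / q_\ell$, so the left-hand side in the statement equals
\[
\frac12 \sum_{D} \bold{P}\big(D(\omega|_{\Lambda_k}) = D\big)\, \Big| \frac{g_m(D)}{q_m} - \frac{g_n(D)}{q_n} \Big| ,
\]
and it suffices to prove that the ratios $g_\ell(D)/g_\ell(D')$, over pairs of connectivity data on $\partial\Lambda_k$ in which $0$ reaches the boundary, converge as $\ell \to \infty$ at speed $\exp\big(-\log(\ell/k)^{1/2-\eta}\big)$.

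\paragraph{The transfer-matrix recursion.}
After replacing $k \le m \le n$ by nearby powers of $2$, which changes each logarithm by $O(1)$ and is absorbed by the hypothesis that $m/k$ is large, write $k = 2^a$, $m = 2^b$, $n = 2^c$ and decompose $\Lambda_n \setminus \Lambda_k$ into the independent dyadic annuli $\Lambda_{2^{j+1}} \setminus \Lambda_{2^j}$, $a \le j < c$. Let $X_j$ record the connectivity data induced on $\partial\Lambda_{2^j}$ by the configuration inside $\Lambda_{2^j}$. A planar-surgery argument shows that $X_{j+1}$ is a deterministic function of $X_j$ and of the configuration in the $j$-th annulus, so $(X_j)_{j \ge a}$ is a time-inhomogeneous Markov chain; its ``dead'' states --- those with no boundary hexagon connected to $0$ --- are absorbing, and $\{0 \leftrightarrow \partial\Lambda_{2^j}\}$ is the event that $X_j$ is alive. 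Thus, started from $X_a = D$, one has $g_{2^j}(D) = \bold{P}_D(X_j \text{ alive})$, and if $T_j$ denotes the sub-stochastic transfer matrix on alive states, then $g_m = M\,\1$ and $g_n = M\,g'$, where $M := T_a T_{a+1} \cdots T_{b-1}$ is an entrywise positive operator and $g' := (\bold{P}_s(X_c \text{ alive}))_s$ is the vector of survival probabilities from scale $2^b$ to scale $2^c$. Comparing $g_m$ and $g_n$ therefore amounts to bounding how strongly $M$ contracts the projective (Hilbert) distance between $\1$ and $g'$.

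\paragraph{Contraction and the arm-number obstruction.}
By the Birkhoff--Hopf theorem, $M$ contracts the Hilbert metric by $\tanh$ of a quarter of the projective diameter of its range, which is finite because RSW crossing estimates make the entries of each $T_j$, \emph{between states with a bounded number of arms}, comparable up to constants that depend only on that number of arms and not on the scale $j$ --- this scale-uniformity is what substitutes for the missing exact scale invariance. The real obstruction is that $g'$ itself becomes large and unevenly spread when $n \gg m$: conditioning on survival all the way out to $\partial\Lambda_n$ biases the connectivity at scale $2^b$ (and at all intermediate scales) toward configurations with \emph{many} arms (many arms being exactly what makes long survival likely), and this bias is unbounded in $n$, so one cannot confine the analysis to a fixed arm budget $L$. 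The remedy is to let $L$ grow with the number $N := \log_2(m/k)$ of available scales: states with more than $L$ arms are discarded via the monochromatic $L$-arm estimate together with the contraction already accumulated, contributing an error that decreases as $L$ increases, whereas the per-scale contraction constant for $\le L$-arm states is only $1 - c(L)$ with $c(L)$ decaying roughly like $\exp(-L^2)$ (rearranging $L$ arms across one dyadic annulus costs, via duality, a polychromatic $(2L)$-arm event). Optimising $L$ of order $\sqrt{\log N}$ balances the two contributions and yields, for every $\eta \in (0,\tfrac12)$, the bound $\exp\big(-\log(m/k)^{1/2-\eta}\big)$, uniformly in $n \ge m$.

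\paragraph{Expected main difficulty.}
The routine ingredients are the Bayes reduction, the planar-surgery verification that $(X_j)_{j \ge a}$ is Markov with ``alive'' absorbing, and the padding to dyadic scales. The step I expect to dominate is the quantitative bookkeeping in the contraction step: the Hilbert diameters to be contracted are not uniformly bounded in $n$ because of the arm-number bias, the per-scale contraction degrades super-exponentially in the truncation level $L$, and the $L$-arm tail improves only polynomially per scale; threading $L$ through Kesten's arm estimates and the Birkhoff contraction so that the total error stays below $\exp(-\log(m/k)^{1/2-\eta})$ is the heart of the proof --- and the reason this method yields only a sub-polynomial rate.
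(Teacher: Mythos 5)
First, a point of reference: the paper does not actually prove Theorem \ref{kestenspeed}. It is stated as a known consequence of Kesten's transfer-matrix method, with the remark that the bound is implicit in \cite{ke86}; the paper's own proofs concern Theorem \ref{1armbound} and use a completely different (site-by-site coupling) argument. So your proposal cannot be checked against an argument in the paper and has to stand on its own as a reconstruction of Kesten's method.

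At the structural level your reconstruction is the right one: reduce the total variation distance to a comparison of survival probabilities as a function of the connectivity data induced on $\partial \Lambda_k$, view the connectivity data across nested annuli as a Markov chain with sub-stochastic transfer matrices, contract in the projective metric, and truncate the number of arms. The genuine gap is in the quantitative step, which you yourself defer as ``the heart of the proof'' --- and the specific numerology you propose does not close. With dyadic annuli there are $N = \log_2(m/k)$ scales; with per-scale contraction $1 - c(L)$, $c(L) \approx e^{-CL^2}$, making the accumulated contraction $(1-c(L))^N \leq \exp(-N^{1/2-\eta})$ forces $CL^2 \leq (\tfrac12+\eta)\log N$, i.e.\ $L \lesssim \sqrt{\log N}$; but then the per-scale probability of exceeding $L$ arms is $e^{-c'L^2}$, which is only a negative power of $N$, and summed over $N$ scales the truncation error is at best a negative power of $N = \log(m/k)$ --- i.e.\ polylogarithmic in $m/k$, vastly larger than $\exp(-(\log(m/k))^{1/2-\eta})$. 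So ``optimising $L$ of order $\sqrt{\log N}$'' cannot balance the two contributions as claimed; whether the error even tends to $0$ depends on the uncontrolled ratio of the constants $c'$ and $C$. This is precisely why, as the paper's introduction records, Kesten does \emph{not} work with annuli of fixed aspect ratio but with annuli of super-exponentially growing width $\exp(i^{\alpha})$, $\alpha > 1$ (and to land on the exponent $\tfrac12-\eta$ one needs $\alpha$ of order $2/(1-2\eta)$): across an annulus of diverging aspect ratio the $\geq L$-arm probability decays in the aspect ratio and not merely in $L$, which is what reconciles the truncation error with the accumulated contraction. Your sketch replaces this essential feature with dyadic annuli, and with dyadic annuli the argument as described yields a much weaker bound. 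The remaining ingredients (the Bayes reduction, the planar gluing showing the connectivity chain is Markov, the Birkhoff--Hopf contraction on truncated states) are standard and correctly identified, but none of them is carried out either, so as it stands this is a strategy outline rather than a proof.
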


With the same proof, which only relies on `soft' tools of planar percolation such as RSW estimates and the Harris-FKG inequality, this above result extends to similar planar Bernoulli percolation on planar transitive graphs, notably bond percolation on $\Z^2$. The transfer matrix method has been used and extended to construct multiple-arms IIC by Damron and Sapozhnikov \cite{ds11}, IIC in a larger class of graphs such as slabs of $\Z^2$ by Basu and Sapozhnikov \cite{bs17} and recently IIC in higher dimensions by Chatterjee, Chinmay, Hanson and Sosoe \cite{cchs25}. In an unpublished work by Schramm \cite{sch} in the case of the $1$-arm and in \cite[Section~3]{gps13} in the case of the $4$-arm, Schramm and Garban, Pete and Schramm developed a proof of the convergence in (\ref{eqn:convergence}), not using transfer matrices and yielding a better speed of convergence.

\begin{theorem}[\cite{sch, gps13}, Scale-by-scale coupling speed of convergence] \label{gpsspeed}
There exists a positive constant $c$ satisfying that for all $k \leq m \leq n \in \N$ such that $\frac{m}{k}$ is large enough,
\[
\sup_{E \subseteq \Omg_{\Lambda_k}}  \Big| \mathbf{P} \big(E \, | \,  0 \leftrightarrow \partial \Lambda_m\big) - \mathbf{P} \big(E \, | \, 0 \leftrightarrow \partial \Lambda_n \big) \Big| \leq \left(\frac{k}{m}\right)^{c} \, .
\]
\end{theorem}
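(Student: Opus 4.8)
To prove this statement, the plan is to carry out the scale-by-scale coupling of \cite[Section~3]{gps13}, specialized to the one-arm conditioning. Since the quantity to be bounded is the total variation distance between the restrictions to $\Lambda_k$ of the two conditioned measures, it suffices to build a coupling of $\bold{P}(\cdot\,|\,0\leftrightarrow\partial\Lambda_m)$ and $\bold{P}(\cdot\,|\,0\leftrightarrow\partial\Lambda_n)$ whose marginals agree on $\Lambda_k$ with probability at least $1-(k/m)^c$. I would first reduce to a cleaner statement. Revealing the configuration in $\Lambda_n\setminus\Lambda_m$ under $\bold{P}(\cdot\,|\,0\leftrightarrow\partial\Lambda_n)$ and using the spatial Markov property, the conditional law inside $\Lambda_m$ is $\bold{P}_{\Lambda_m}(\cdot\,|\,0\leftrightarrow\mathcal{C})$, where $\bold{P}_{\Lambda_m}$ denotes critical percolation on $\Lambda_m$ and $\mathcal{C}\subseteq\partial\Lambda_m$ is the ($\bold{P}$-a.s.\ nonempty) set of hexagons joined to $\partial\Lambda_n$ inside that annulus; and $\bold{P}(\cdot\,|\,0\leftrightarrow\partial\Lambda_m)$ equals $\bold{P}_{\Lambda_m}(\cdot\,|\,0\leftrightarrow\partial\Lambda_m)$ on $\Lambda_m$. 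Since $\bold{P}(\cdot\,|\,0\leftrightarrow\partial\Lambda_n)$ restricted to $\Lambda_m$ is thus a mixture of the measures $\bold{P}_{\Lambda_m}(\cdot\,|\,0\leftrightarrow\mathcal{C})$, it is enough to prove
\[
\Delta_{k,m}\ :=\ \sup_{\emptyset\neq\mathcal{C}\subseteq\partial\Lambda_m}\ \sup_{E\subseteq\Omg_{\Lambda_k}}\Big|\bold{P}_{\Lambda_m}\big(E\,|\,0\leftrightarrow\mathcal{C}\big)-\bold{P}_{\Lambda_m}\big(E\,|\,0\leftrightarrow\partial\Lambda_m\big)\Big|\ \le\ (k/m)^c ,
\]
i.e.\ that, as seen from $\Lambda_k$, the conditioning that $0$ be connected to distance $m$ barely depends on the precise set of boundary hexagons it is asked to reach.

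To establish this I would fix a geometric sequence of radii $k=\rho_0<\rho_1<\cdots<\rho_J=m$ with $\rho_{j+1}/\rho_j$ a fixed large constant, so that $J\asymp\log(m/k)$, and couple the two conditioned measures by revealing the configuration annulus by annulus, from the outermost annulus $\mathcal{A}_{J-1}=\Lambda_{\rho_J}\setminus\Lambda_{\rho_{J-1}}$ inward. The invariant to maintain is that, once everything outside $\Lambda_{\rho_j}$ has been revealed, each residual law inside $\Lambda_{\rho_j}$ is of the form $\bold{P}_{\Lambda_{\rho_j}}(\cdot\,|\,0\leftrightarrow\mathcal{C}_j)$ for a nonempty ``active set'' $\mathcal{C}_j\subseteq\partial\Lambda_{\rho_j}$ (the boundary hexagons still joined to the target). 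The engine is a one-step estimate: there is a universal $\delta>0$ such that, conditionally on the revealed outside, with probability at least $\delta$ the annulus $\mathcal{A}_{j-1}$ contains a black circuit around $\Lambda_{\rho_{j-1}}$ that is joined within $\mathcal{A}_{j-1}$ to the current active set. This is exactly where planarity enters: the relevant event is increasing, and so is the conditioning defining the residual measure, so Harris--FKG lets one bound it below by the unconditional probability of a circuit-plus-arm pattern, which is $\ge\delta$ by the RSW estimates uniformly over the active set (one of its hexagons suffices to reach the circuit).

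On this ``good'' event the conditioning collapses. If $\gamma^{\ast}$ is the outermost black circuit around $\Lambda_{\rho_{j-1}}$ inside $\mathcal{A}_{j-1}$, then any black path from $0$ to the target crosses $\gamma^{\ast}$, so inside $\Lambda_{\rho_{j-1}}$ the event ``$0\leftrightarrow$ target'' is equivalent to ``$0\leftrightarrow\gamma^{\ast}$'' — an increasing event that no longer sees the active set. Choosing the exploration that reveals $\mathcal{A}_{j-1}$ appropriately, the new effective boundary condition on $\partial\Lambda_{\rho_{j-1}}$ is produced with a law that does not depend on the history, so the two conditioned measures can be coupled so that, on the good event, they produce the same $\gamma^{\ast}$ and the same effective boundary condition; from that scale inward the exploration is then run identically for the two. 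Hence the coupling fails to agree on $\Lambda_k$ only if the good event fails at all $J$ annuli, which has probability at most $(1-\delta)^J$, and with $J\asymp\log(m/k)$ this is $\le(k/m)^c$ for a suitable $c=c(\delta)>0$, giving $\Delta_{k,m}\le(k/m)^c$ and hence the theorem.

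The step I expect to be the main obstacle is the ``collapse'': one must design the exploration of each annulus so that, on the circuit event, the residual law inside the outermost circuit is genuinely the canonical measure $\bold{P}(\cdot\,|\,0\leftrightarrow\gamma^{\ast})$ regardless of the past, and — more delicate — so that the two explorations can actually be coupled to output the same circuit with probability comparable to $\delta$, even though the conditioned laws of an annulus under the two chains are not equal (this forces one first to ``equilibrate'' the two chains to circuit-type active sets, for which the one-arm connection probabilities are uniformly comparable to the full one-arm probability, and only then to merge them). This is precisely the bookkeeping done in \cite[Section~3]{gps13} for the four-arm conditioning; since it uses only RSW and Harris--FKG, the one-arm case goes through in the same way, and I would adapt that argument and refer to it for the details.
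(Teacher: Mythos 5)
This theorem is not proved in the paper at all: it is imported from \cite{gps13}, and the paper only gives a one-paragraph description of the scale-by-scale annulus coupling, which is exactly the plan you lay out (geometric sequence of radii, reveal annuli from the outside in, at each annulus a uniformly positive chance of a black circuit that collapses the conditioning, failure probability $(1-\delta)^{J}$ with $J\asymp\log(m/k)$). So at the level of strategy your proposal matches the cited proof, and you correctly isolate the genuinely delicate point, namely merging the two chains once the conditioning has collapsed, which you, like the paper, delegate to \cite{gps13}.

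Two of your justifications would not survive being written out literally. The main one is the one-step estimate: Harris--FKG does let you lower-bound the conditional probability of an increasing event by its unconditional probability, but the unconditional probability of ``a black circuit in $\mathcal{A}_{j-1}$ around $\Lambda_{\rho_{j-1}}$ joined within the annulus to the active set $\mathcal{C}_j$'' is \emph{not} bounded below uniformly in $\mathcal{C}_j$: if $\mathcal{C}_j$ is a single hexagon of $\partial\Lambda_{\rho_j}$, joining it to a circuit lying a macroscopic distance away costs a one-arm probability that tends to $0$ with the scale, so the claimed uniform $\delta$ for the unconditional pattern is false. The correct route is to apply FKG only to the circuit event itself (conditional probability $\geq\delta$ by RSW under the increasing conditioning) and to get the connection to $\mathcal{C}_j$ for free from the conditioning: on $\{0\leftrightarrow\mathcal{C}_j\}$ there is a black crossing of the annulus ending in $\mathcal{C}_j$, and by planarity any black circuit around $\Lambda_{\rho_{j-1}}$ inside the annulus must intersect it. Secondly, the opening reduction of $\bold{P}(\cdot\,|\,0\leftrightarrow\partial\Lambda_n)$ restricted to $\Lambda_m$ to a mixture of $\bold{P}_{\Lambda_m}(\cdot\,|\,0\leftrightarrow\mathcal{C})$ over subsets $\mathcal{C}\subseteq\partial\Lambda_m$ is not exact, because a connecting path may leave and re-enter $\Lambda_m$ several times; given the outside configuration the residual event is increasing but not of that simple form, so one must carry general increasing boundary conditions through the induction. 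Both points are standard and repairable, and the collapse-on-a-circuit mechanism is indifferent to the fix, but as written the key probabilistic estimate rests on an incorrect statement.
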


The constant $c$ in Theorem \ref{gpsspeed} can be thought of as an RSW constant, and it would be tedious to rely on the construction in \cite{gps13} to obtain an explicit lower-bound on $c$. The following question is left open, where `largest' can be understood in the sense of a supremum.

\begin{question} \label{questionc} What is the largest $c$ satisfying the conclusion of Theorem \ref{gpsspeed}, and can it be expressed in terms of arm exponents? %arm event probabilities?
\end{question}

We briefly expand on the difference with the construction of \cite{ke86} to highlight the progression from the proof of Theorem \ref{kestenspeed} in \cite{ke86} to the proof of Theorem \ref{gpsspeed} in \cite{sch, gps13}, consisting in an `annulus-by-annulus' exploration, to the method of the present article where we use a `site-by-site' exploration. The transfer matrix method of \cite{ke86} relies on considering annuli of width $\exp(i^{\alpha})$, with $\alpha > 1$, and under the assumption that one can find a black circuit surrounding $\Lambda_k$ in \textit{all of the annuli} between $\Lambda_k$ and $\Lambda_m$. The proof of \cite{sch, gps13} uses two additional elements. The first is removing the condition that a black circuit needs to be found in all of the annuli. Instead, we consider annuli of width $\exp(i)$ and it is shown there is a constant probability of achieving coupling at each of these annuli \textit{regardless of what has been revealed in the previous annuli}. The second element is to try achieving coupling by revealing an `innermost black circuit' in each of the annuli, which is more explicit than going through a transfer matrix. The core of the present article is to find another way of coupling, which allows us to reveal an outermost black circuit in $\Lambda_m$. This can be thought of as considering $\Lambda_m \backslash \Lambda_k$ as a very large annulus. As a consequence, we get rid of the need to split that region into nested annuli and obtain a more quantitative speed of convergence.

\subsection{Main result} The main contribution of this article is to provide a short and simple proof of existence of the IIC using standard arguments of percolation theory and statistical mechanics. This partially answers Question \ref{questionc} by deriving a lower-bound on $c$ equal to the one-arm exponent. This is the first time a lower-bound on $c$ is made explicit. To this end, we provide a coupling construction relying on a site-by-site exploration. We explain the proof strategy in more detail soon after stating the result.

\begin{theorem}[Site-by-site coupling speed of convergence] \label{1armbound}
For all $k \leq m \leq n \in \N^*$,
\[
\sup_{E \subseteq \Omg_{\Lambda_k}} \Big|\mathbf{P} \big(E \, | \, 0 \leftrightarrow \partial \Lambda_m\big) - \mathbf{P} \big(E \,| \, 0 \leftrightarrow \partial \Lambda_n \big) \Big| \leq \mathbf{P} \big(\Lambda_k \leftrightarrow^* \partial \Lambda_m\big) \, ,
\]
where $\{\Lambda_k \leftrightarrow^* \Lambda_m\}$ is the dual one-arm event that there is a path of white hexagons from $\Lambda_k$ to $\partial \Lambda_m$.
\end{theorem}

On the triangular lattice $\T$, explicit asymptotics of $\mathbf{P} (\Lambda_k \leftrightarrow^* \partial \Lambda_m)$ are known. Indeed, Lawler, Schramm and Werner \cite{lsw02} proved $\mathbf{P}( \Lambda_k \leftrightarrow \partial \Lambda_m ) = (k / m)^{ 5/48 + o(1)}$ when $\frac{m}{k}$ goes to infinity. The value $\frac{5}{48}$ is called the one-arm exponent. The same exponent is conjectured, but not proved, in critical bond percolation on $\Z^2$. We obtain the following as an immediate corollary.

\begin{corollary} For all $\eta \in (0, \frac{5}{48})$, for all $k \leq m \leq n \in \N$ such that $\frac{m}{k}$ is large enough,
\[
\sup_{E \subseteq \Omg_{\Lambda_k}} \Big|\mathbf{P} \big(E \, | \, 0 \leftrightarrow \partial \Lambda_m\big) - \mathbf{P} \big(E \, | \, 0 \leftrightarrow \partial \Lambda_n \big) \big| \leq \Big( \frac{k}{m} \Big)^{\frac{5}{48} - \eta} \, .
\]
\end{corollary}

\subsection{Generalisation to planar triangulations} Theorem \ref{1armbound} holds with the same proof in any planar model, \textit{regardless of self-duality or symmetry}. To illustrate this, Consider $G = (V, E)$ a planar triangulation \footnote{We say a planar graph is a graph that can be embedded in the Euclidean plane, and we say it is a triangulation if all of its faces are triangles. When working with planar graphs which are not triangulations, the definition of $\{A \leftrightarrow^* B\}$ would need to be adapted, allowing for paths of white vertices to be neighbor through the faces of the graph and not only through the edges. Note that in triangulations, it is equivalent to be neighbor through an edge and through a face. } and let $v_0 \in V$. Denote by $\Lambda_n(v_0)$ the ball of radius $n$ around $v_0$ for the graph distance and $\partial \Lambda_n(v_0) = \Lambda_{n+1}(v_0) \backslash \Lambda_n(v_0)$. We denote by $\mathbf{P}_p^G$ Bernoulli percolation of parameter $p$ on the sites of $G$, and similar definitions as before for connection events. Then, for all integers $k \leq m \leq n$ and any event $E$ depending only on $\Lambda_k(v_0)$ we have
\begin{equation} \label{bound in planar triang}
\Big| \mathbf{P}_p^G \big(E \, |\, v_0 \leftrightarrow \partial \Lambda_m(v_0) \big) - \mathbf{P}_p^G \big(E\,|\, v_0 \leftrightarrow \partial \Lambda_n(v_0) \big) \Big| \leq \mathbf{P}_p^G \big(\Lambda_k(v_0) \leftrightarrow^* \partial \Lambda_m(v_0) \big).
\end{equation}

As a consequence, it is enough that $\mathbf{P}_{p}^G \big(\Lambda_k(v_0) \leftrightarrow^* \partial \Lambda_m(v_0) \big)$ converges to $0$ to define an IIC measure at a parameter $p$ where $\mathbf{P}_p^G(v_0 \leftrightarrow \infty) = 0$. Indeed, consider the measures $\mu_{k, n}$ on $\Omg_{\Lambda_k}$ defined by
\[
\mu_{k,n}(E) = \mathbf{P}_p^G \big(E \, |\, v_0 \leftrightarrow \partial \Lambda_n(v_0) \big)
\]
For every $k$, $(\mu_{k, n})_{n \in \N}$ is a Cauchy sequence in total variation by (\ref{bound in planar triang}), thus convergent. Taking the limit in $k$ follows by Carathéodory's extension theorem. Our proof then has applications in a wide range of planar models where proofs of existence of the IIC are known but are technical and model-dependent. For example, using the above limiting procedure, one can construct a quenched IIC on the UIHPT or other planar maps, and a quenched IIC in Voronoï percolation. We mention that the existence of an annealed IIC on the UIHPT has been proved by Richier \cite{richier18}. Note that Theorem \ref{1armbound} is also valid for bond percolation on planar graphs such as $\Z^2$, as long as the dual connection events are defined in the classical way.

\subsection{Proof overview} The probabilistic tools used in the proof are a coupling based on uniform random variables and a property reminiscent of a spatial Markov property, which already appeared in the works of Kesten and Garban, Pete and Schramm. We immediately give a sketch of proof, which could already be convincing as a full proof. We will couple $\omg^{(m)}$ a percolation conditioned on having a one-arm to distance $m$, and $\omg^{(n)}$ conditioned on having a one-arm to distance $n$. A key ingredient of the proof is to sample both $\omg^{(m)}$ and $\omg^{(n)}$ above the same unconditioned percolation $\omg$, using uniform random variables attached to the hexagons of the lattice. The role of $\omg$ is crucial to define the order in which the uniform random variables are revealed.

In that sketch of proof and in the rest of the paper, given $n \geq 1$ we denote by $\eps$ a generic element of $\Omg_{\Lambda_n}$: for example, for $x \in \Lambda_n$ we abbreviate the set $\{\eps \in \Omg_{\Lambda_n} : \eps_x = 1\}$ by $\{\eps_x = 1\}$, and will use the letter $\eps$ specifically for that notation. The coupled configurations $\omg, \omg^{(m)},\omg^{(n)}$ are related to a distinct, abstract probability space which we define later, and we will call $\mathbb{P}$ the associated probability. When using $\mathbf{P}$, the configurations $\omg, \omg^{(m)}, \omg^{(n)}$ are treated as deterministic.

\begin{proof}[Sketch of proof of Theorem \ref{1armbound}]
Let $k \leq m \leq n$ be fixed integers. We construct a coupling between a percolation configuration $\omg$ having law $\mathbf{P}$, $\omg^{(m)}$ having law $\mathbf{P}(\cdot | 0 \leftrightarrow \partial \Lambda_m)$ and $\omg^{(n)}$ having law $\mathbf{P}(\cdot | 0 \leftrightarrow \partial \Lambda_n)$, through exploring $\omg$ and assigning values to $(\omg_x, \omg^{(m)}_x,\omg^{(n)}_x)_{x \in \Lambda_m}$ site by site using uniform random variables $(U_x)_{x \in \Lambda_m}$. Assume that at some point, the hexagons $x_1, x_2, \dots, x_k$ have been explored, meaning that for each $x \in \{x_1, \dots, x_k\}$, the values of $U_{x}, \omg_x, \omg^{(m)}_x, \omg^{(n)}_x$ are known. From that information, we pick $x_{k+1}$ in $\Lambda_m \backslash \{x_1, \dots, x_k\}$, and set the values of $\omg_{x_{k+1}}, \omg^{(m)}_{x_{k+1}}$ and $\omg^{(n)}_{x_{k+1}}$ according to
\[
\begin{aligned}
\omg_{x_{k+1}} &:= \1 \Big\{U_{x_{k+1}} \leq \frac{1}{2} \Big\} \, , \\
\omg^{(m)}_{x_{k+1}} &:= \1 \Big\{U_{x_{k+1}} \leq \mathbf{P} \big(\eps_{x_{k+1}} = 1 | \,\eps_{x_j} = \omg^{(m)}_{x_j} \, \forall j \leq k, 0 \leftrightarrow \partial \Lambda_m \big) \Big\} \, , \\
\omg^{(n)}_{x_{k+1}} &:= \1 \Big\{U_{x_{k+1}} \leq \mathbf{P} \big(\eps_{x_{k+1}} = 1 | \, \eps_{x_j} = \omg^{(n)}_{x_j} \forall j \leq k, 0 \leftrightarrow \partial \Lambda_n\big) \Big\} \, .
\end{aligned}
\]
The criterion for choosing $x_{k+1}$ is to reveal the outermost black circuit of $\omg$ in $\Lambda_m$ first. Such a circuit exists and separates $\Lambda_k$ from $\partial \Lambda_m$ if and only if there is no white arm from $\Lambda_k$ to $\partial \Lambda_m$ in $\omg$, hence with probability $1 - \mathbf{P}(\Lambda_k \leftrightarrow^*\partial \Lambda_m)$. Using the FKG inequality, we can prove that $\omg^{(m)} \geq \omg$ and $\omg^{(n)} \geq \omg$ almost surely, implying that any black circuit in $\omg$ is also black in $\omg^{(m)}$ and $\omg^{(n)}$. We can see here the importance of $\omg$: we used it to construct an exploration revealing a black circuit common to $\omg^{(m)}$ and $\omg^{(n)}$ while making sure no hexagon has been explored inside that circuit. The final observation is that if such a black circuit $\Gamma$ cutting $\Lambda_k$ off of $\partial \Lambda_m$ has been revealed, then in the region interior to $\Gamma$, conditioning on $\{0 \leftrightarrow \partial \Lambda_m\}$ or on $\{0 \leftrightarrow \partial \Lambda_n\}$ is equivalent to conditioning on $\{0 \leftrightarrow \Gamma \}$. As a consequence and due to their definition, the variables $\omg^{(m)}$ and $\omg^{(n)}$ must coincide in the inner region defined by $\Gamma$, hence in $\Lambda_k$.
\end{proof}

\subsection{Perspectives}
There is no reason for the dual one-arm probability to be a sharp upper-bound in Theorem \ref{1armbound}. In the proof, we are likely to discover more black vertices in $\omg^{(m)}$ and $\omg^{(n)}$ than just the vertices in $\omg$, so maybe the current exploration reveals a black circuit common to $\omg^{(m)}$ and $\omg^{(n)}$ before reaching $\Lambda_k$ with even larger probability. The choice of the exploration is also crucial, and there could be more clever explorations. Thus, Question \ref{questionc} is still open and we expect the answer to be larger than the one-arm exponent $\frac{5}{48}$. We conjecture $c$ is finite, yet there does not seem to be a simple argument to show it, so we leave that problem open as well.

Our proof of Theorem \ref{1armbound} relies on monotonicity of the measures conditioned on having a one-arm with respect to the unbiased measure. Thus, it cannot be adapted as such to arms of different colors, unlike the proofs of \cite{ke86} and \cite{gps13}. When trying the site-by-site exploration proof with arms of different colors, it is no longer easy to find a circuit with respect to which a similar spatial Markov property holds. For monochromatic arms, the proof of Theorem \ref{1armbound} could be adapted as such to the monochromatic two-arm event. It is however unclear what sets could play the role of the black circuits for the monochromatic $k$-arm event with $k \geq 3$. This leads to the question

\begin{question} Can multiple-arms IIC be constructed using a site-by-site exploration, as in the proof of Theorem \ref{1armbound}?
\end{question}

Finally, we briefly comment on larger dimensions. For Bernoulli percolation on $\Z^d$ with $d \geq 3$, there is very low probability of finding a black hypersurface surrouding $0$ at criticality. Thus, it seems there is no hope of adapting the construction of Theorem \ref{1armbound} to larger dimensions. Known methods for constructing the IIC in larger dimensions rely on adaptations of the transfer matrix method, as in \cite{cchs25}, which do not give a polynomial speed of convergence.

\begin{question}
Does convergence to the IIC measure in dimension $d \geq 3$ hold with polynomial speed?
\end{question}

\section{Proofs} \label{proofs}

\subsection{Setting}
 We work with site percolation on the triangular lattice $\T$ having $\Z + (\frac{1}{2}, \frac{\sqrt{3}}{2}) \Z \subseteq \R^2$ as a set of vertices and where edges are between vertices at Euclidean distance $1$ to each other. We represent this model by face percolation on the hexagonal lattice by centering a hexagon of diameter $\sqrt{3}$ on top of each vertex, and colouring in black the hexagons covering open sites and in white the others. We identify sets of vertices $\Lambda \subseteq \Z + (\frac{1}{2}, \frac{\sqrt{3}}{2})\Z$ with subgraphs of $\T$ having full set of edges. We define $\partial \Lambda$, the boundary of $\Lambda$, as the set of vertices at graph distance exactly $1$ to $\Lambda$. For $n \in \N$, we let $\Lambda_n$ be the set of vertices at graph distance to $0$ less than or equal to $n$ in $\T$. Given $\Lambda \subseteq \T$, we let $\Omg_{\Lambda} = \{0, 1\}^{\Lambda}$ be the space of percolation configurations on $\Lambda$. We endow $\Omg_{\T}$ with the Bernoulli percolation measure $\mathbf{P} = \mathrm{Ber}(\frac{1}{2})^{\otimes \T}$. We also introduce an abstract probability space with associated probability $\P$, on which is defined a countable family of independent uniform random variables on $[0, 1]$. The letter $\omg$, with or without superscript, will always be used for percolation configurations defined using these uniform random variables. We reserve the letter $\eta$ to fixed elements of $\Omg_{\Lambda}$, for $\Lambda \subseteq \T$. If $S \subseteq \Lambda$ and $\eta \in \Omg_{\Lambda}$, we denote by $\eta_S$ the restriction of $\eta$ to $S$, and we use similar notation for random percolation configurations. We let $\eps$ denote a generic element of $\Omg_{\Lambda}$: for $S \subseteq \Lambda$ and $\eta \in \Omg_{S}$, we let
\begin{equation} \label{defnofeps}
\{\eps_S = \eta\} := \{\eps \in \Omg_{\Lambda} : \eps_S = \eta\} \, ,
\end{equation}
and we use the letter $\eps$ specifically for that notation. Given $S, S' \subseteq \T$, we write $\{S \leftrightarrow S'\}$ (resp. $\{S \leftrightarrow^* S'\}$) for the event that there is a neighbor-to-neighbor path of vertices $(x_1, \dots, x_k)$ such that $x_1 \in \partial S$, $x_k \in \partial S'$ and $\omg_{x_i} = 1$ (resp. $\omg_{x_i} = 0$) for all $1 \leq i \leq k$. By convention, we declare certain the event $\{S \leftrightarrow S'\}$ if $S$ and $S'$ have non-empty intersection or are adjacent. We will also consider connection events where the connection path is constrained to stay within a region $\Lambda$, which we will denote by $\{S \overset{\Lambda}{\leftrightarrow} S'\}$. Finally, given $n \in \N$, $S \subseteq \Lambda_n$ and $\eta \in \Omg_S$, we say $\eta$ is \textbf{compatible} with $\{0 \leftrightarrow \partial \Lambda_n\}$ if there exists $\eta' \in \Omg_{\Lambda_n}$ such that $\eta'_S = \eta$ and $\eta' \in \{0 \leftrightarrow \partial \Lambda_n\}$.

\subsection{Site-by-site coupling}
We hereby describe what we call an exploration and how we sample percolation configurations conditioned on one-arm events using uniform random variables. Let $\Lambda \subseteq \T$ be a finite set of vertices and $(U_x)_{x \in \Lambda}$ a collection of iid $[0, 1]$-valued uniform random variables. We will denote by $X = (X(i))_{1 \leq i \leq |\Lambda|}$ a random bijection from $\{1, \dots, |\Lambda|\}$ to $\Lambda$, satisfying that for all $1 \leq i \leq |\Lambda| - 1$,
$$
X(i) \text{ is a deterministic function of } \big(X(j)\big)_{1 \leq j \leq i - 1} \text{ and } \big( U_{X(j)} \big)_{1 \leq j \leq i - 1} \, .
$$
We say that $X$ is an \textbf{exploration} of $\Lambda$ associated to $(U_x)_{x \in \Lambda}$. We also let $X_{[i]} := \{X(1), \dots, X(i)\}$ for $0 \leq i \leq |\Lambda|$, this is the set of vertices which have been explored at iteration $i$. Let $n \in \N$ and let $X$ be an exploration of $\Lambda$ associated to $(U_x)_{x \in \Lambda}$. We let $\omg^{(n)}$ be defined by
\begin{equation} \label{omgnfromX}
\omg^{(n)}_{X(i)} := \1 \Big\{U_{X(i)} \leq \mathbf{P}\big(\eps_{X(i)} = 1 |  \eps_{X_{[i-1]}} = \omg^{(n)}_{X_{[i-1]}}, 0 \leftrightarrow \partial \Lambda_n\big) \Big\} \, ,
\end{equation}
where we recall the letter $\eps$ is used for denoting events as in (\ref{defnofeps}) and $\omg_{X_{[i-1]}}, \omg^{(m)}_{X_{[i-1]}},\omg^{(n)}_{X_{[i-1]}}$ and $X(i)$ are treated as deterministic when in input of the probability $\mathbf{P}$. In parallel, we define $\omg$ by
\[
\omg_{X(i)} = \1 \Big\{ U_{X(i)} \leq \frac{1}{2} \Big\} \, .
\]
This is equivalent to setting $\omg_x = \1 \{U_x \leq \frac{1}{2}\}$ for all  $x \in \Lambda$. Hence, $\omg$ has distribution $\mathbf{P}$ and is coupled to $\omg^{(n)}$. The following lemma expresses that defining $\omg^{(n)}$ as in (\ref{omgnfromX}) samples it from the distribution $\mathbf{P}(\cdot | 0 \leftrightarrow \partial \Lambda_n)$, and a simple use of the FKG inequality is sufficient to additionally obtain the monotonicity $\omg^{(n)} \geq \omg$. This lemma and its proof are very similar to \cite[Lemma~2.1]{drt19}.

\begin{lemma} \label{couplingworks}
Let $\Lambda$ be a finite subset of $\T$. Let $n \in \N$ and $(U_x)_{x \in \Lambda}$ be independent $[0, 1]$-valued uniform random variables. Let $X$ be an exploration of $\Lambda$ associated to $(U_x)_{x \in \Lambda}$. Then, the random variable $\omg^{(n)} \in \Omg_{\Lambda}$ defined by (\ref{omgnfromX}) has as a distribution the restriction to $\Lambda$ of $\mathbf{P}(\cdot | 0 \leftrightarrow \partial \Lambda_n)$. Moreover, $\omg^{(n)} \geq \omg$ almost surely.
\end{lemma}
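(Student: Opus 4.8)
The plan is to prove the two assertions of Lemma~\ref{couplingworks} in turn. Throughout write $N := |\Lambda|$, and for $0 \le i \le N$ let $\mathcal{F}_i := \sigma((U_{X(j)})_{1 \le j \le i})$; since $X(i)$ is a deterministic function of $(X(j))_{j < i}$ and $(U_{X(j)})_{j < i}$, one has $\mathcal{F}_i = \sigma((X(j), U_{X(j)})_{1 \le j \le i})$, so that $X(i+1)$ and $\omg^{(n)}_{X_{[i]}}$ are $\mathcal{F}_i$-measurable. First I would record two facts. \emph{Freshness:} conditionally on $\mathcal{F}_i$, the variable $U_{X(i+1)}$ is uniform on $[0,1]$ and independent of $\mathcal{F}_i$; this is seen by splitting over the finitely many atoms $\{X(j) = s_j,\ j \le i+1\}$ of $\sigma((X(j))_{j \le i+1})$, on each of which $\mathcal{F}_i \subseteq \sigma((U_{s_j})_{j \le i})$, a $\sigma$-algebra independent of $U_{s_{i+1}}$ because $X$ is a bijection. \emph{Compatibility is preserved:} almost surely $\omg^{(n)}_{X_{[i]}}$ is compatible with $\{0 \leftrightarrow \partial \Lambda_n\}$ for every $i$, so the conditional probability in (\ref{omgnfromX}) is always well defined; indeed, writing $p_i$ for that conditional probability, one has $\{\omg^{(n)}_{X(i)} = 1\} \subseteq \{p_i > 0\}$ and $\{\omg^{(n)}_{X(i)} = 0\} \subseteq \{p_i < 1\}$ up to $\P$-null sets (as $U_{X(i)} \in (0,1)$ a.s.), which is the inductive step on $i$.

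For the distributional statement, fix $\eta \in \Omg_\Lambda$; if $\eta$ is incompatible with $\{0 \leftrightarrow \partial \Lambda_n\}$ then both sides vanish by the compatibility fact, so assume $\eta$ compatible. Set $\beta(S, \zeta) := \bold{P}(\eps_S = \zeta,\ 0 \leftrightarrow \partial \Lambda_n)$ for $S \subseteq \Lambda$, $\zeta \in \Omg_S$, so $\beta(\emptyset, \emptyset) = \bold{P}(0 \leftrightarrow \partial \Lambda_n) > 0$, and introduce
\[
M_i := \frac{\prod_{j=1}^{i} \1\{\omg^{(n)}_{X(j)} = \eta_{X(j)}\}}{\beta(X_{[i]}, \eta_{X_{[i]}})} \qquad (0 \le i \le N),
\]
which is well defined and bounded since the denominator is positive whenever the numerator is nonzero. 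The crux is that $(M_i)_{0 \le i \le N}$ is an $(\mathcal{F}_i)$-martingale: $\beta(X_{[i]}, \eta_{X_{[i]}})$ is $\mathcal{F}_{i-1}$-measurable because $X(i)$ is, and by the freshness fact together with (\ref{omgnfromX}) one has $\E[\1\{\omg^{(n)}_{X(i)} = \eta_{X(i)}\} \mid \mathcal{F}_{i-1}] = \bold{P}(\eps_{X(i)} = \eta_{X(i)} \mid \eps_{X_{[i-1]}} = \omg^{(n)}_{X_{[i-1]}},\, 0 \leftrightarrow \partial \Lambda_n)$; multiplying by $\prod_{j<i} \1\{\omg^{(n)}_{X(j)} = \eta_{X(j)}\}$ lets one replace $\omg^{(n)}_{X_{[i-1]}}$ by $\eta_{X_{[i-1]}}$, and the chain rule rewrites this factor as $\beta(X_{[i]}, \eta_{X_{[i]}}) / \beta(X_{[i-1]}, \eta_{X_{[i-1]}})$, giving $\E[M_i \mid \mathcal{F}_{i-1}] = M_{i-1}$. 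Hence $\E[M_N] = \E[M_0] = 1/\bold{P}(0 \leftrightarrow \partial \Lambda_n)$; since $X_{[N]} = \Lambda$ and $X$ is a bijection, $M_N = \1\{\omg^{(n)} = \eta\}/\beta(\Lambda, \eta)$, and therefore $\P(\omg^{(n)} = \eta) = \beta(\Lambda, \eta)/\bold{P}(0 \leftrightarrow \partial \Lambda_n) = \bold{P}(\eps_\Lambda = \eta \mid 0 \leftrightarrow \partial \Lambda_n)$, as wanted.

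For the monotonicity, it is enough to show $p_i \ge \tfrac{1}{2}$ almost surely for every $i$, since then $\omg^{(n)}_{X(i)} = \1\{U_{X(i)} \le p_i\} \ge \1\{U_{X(i)} \le \tfrac{1}{2}\} = \omg_{X(i)}$, hence $\omg^{(n)} \ge \omg$ a.s. Fix $i$ and work on the realized past; by the compatibility fact the configuration $\zeta := \omg^{(n)}_{X_{[i-1]}}$ is compatible, so under $\bold{P}(\cdot \mid \eps_{X_{[i-1]}} = \zeta)$ the sites outside $X_{[i-1]}$ are i.i.d.\ $\mathrm{Ber}(\tfrac{1}{2})$. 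Both $\{0 \leftrightarrow \partial \Lambda_n\}$ and $\{\eps_{X(i)} = 1\}$ are increasing in those sites (note $X(i) \notin X_{[i-1]}$), so the Harris--FKG inequality yields $p_i = \bold{P}(\eps_{X(i)} = 1 \mid \eps_{X_{[i-1]}} = \zeta,\, 0 \leftrightarrow \partial \Lambda_n) \ge \bold{P}(\eps_{X(i)} = 1 \mid \eps_{X_{[i-1]}} = \zeta) = \tfrac{1}{2}$.

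The step I expect to be the main obstacle is the freshness fact. The whole point of coupling $\omg^{(n)}$ over the unbiased $\omg$ is that the exploration $X$ --- hence the site $X(i+1)$ --- is a function of the full vector $(U_{X(j)})_{j \le i}$, which is strictly finer than the revealed values $(\omg^{(n)}_{X(j)})_{j \le i}$; so one cannot directly invoke a sequential-sampling principle conditioned on the partial configuration, and the right statement is that $U_{X(i+1)}$ stays uniform given $\mathcal{F}_i$, which forces the atomwise argument above. The remaining ingredients --- the chain rule, the martingale bookkeeping, and the Harris--FKG step --- are routine; the only extra point to keep in mind is that $\{0 \leftrightarrow \partial \Lambda_n\}$ may depend on sites outside $\Lambda$ when $\Lambda_n \not\subseteq \Lambda$, which affects none of the arguments.
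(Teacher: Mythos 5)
Your proof is correct and follows essentially the same route as the paper: conditional uniformity of the freshly revealed $U_{X(i)}$ given the exploration's past, a step-by-step telescoping of conditional probabilities to identify the law of $\omg^{(n)}$, and the Harris--FKG inequality applied to the product measure $\bold{P}(\cdot \,|\, \eps_{X_{[i-1]}} = \eta_{X_{[i-1]}})$ for the domination $\omg^{(n)} \geq \omg$. The only difference is presentational: you package the induction as a martingale identity $\E[M_N] = \E[M_0]$ and add explicit checks of ``freshness'' and of preservation of compatibility (which in fact handle the randomness of the sets $X_{[i]}$ somewhat more carefully than the paper's direct induction on $\P(\omg^{(n)}_{X_{[i]}} = \eta_{X_{[i]}})$), but the underlying computation is the same.
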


\begin{proof}[Proof of Lemma \ref{couplingworks}.]
Let $n \in \N^*$, $\Lambda \subseteq \T$, $(U_x)_{x \in \Lambda}$, $( X(i))_{1 \leq i \leq |\Lambda|}$ be as in the statement of the lemma, and let $\omg_{\Lambda}, \omg^{(n)}_{\Lambda}$ be defined as in (\ref{omgnfromX}). We prove by induction on $0 \leq i \leq |\Lambda|$: for all $\eta \in \Omg_{\Lambda}$
\[
\P \Big(\omg^{(n)}_{X_{[i]}} = \eta_{X_{[i]}} \Big) = \mathbf{P}\Big(  \eps_{X_{[i]}} = \eta_{X_{[i]}} \big| \, 0 \leftrightarrow \partial \Lambda_n \Big) \, .
\]
Both probabilities are $1$ at $i = 0$. For $i \geq 1$, assume the above holds at $i-1$.  By (\ref{omgnfromX}) and on the event $\{\omg^{(n)}_{X_{[i-1]}} = \eta_{X_{[i-1]}}\}$,
\[
\Big\{ \omg^{(n)}_{X(i)} = 1 \Big\} = \Big\{ U_{X(i)} \leq \mathbf{P} \big( \eps_{X(i)} = 1 | \eps_{X_{[i-1]}} = \eta_{X_{[i-1]}}, 0 \leftrightarrow \partial \Lambda_n \big) \Big\} \, .
\]
Remark that $\omg^{(n)}_{X_{[i-1]}}$ is measurable in $(X(j))_{1 \leq j \leq i-1}$ and the attached uniform random variables, and $X(i)$ is deterministic in this information. Therefore, conditionally on $\{\omg^{(n)}_{X_{[i-1]}} = \eta_{X_{[i-1]}}\}$, $U_{X(i)}$ has a uniform distribution. We deduce
\[
\P \Big( \omg^{(n)}_{X(i)} = \eta_{X(i)} \big| \, \omg^{(n)}_{X_{[i-1]}} = \eta_{X_{[i-1]}} \Big) = \mathbf{P}\Big(\eps_{X(i)} = \eta_{X(i)} \big| \, \eps_{X_{[i-1]}} = \eta_{X_{[i-1]}}, \, 0 \leftrightarrow \partial \Lambda_n \Big) \, .
\]
We conclude using the above and the induction hypothesis:
\begin{align*}
\P \Big(\omg^{(n)}_{X_{[i]}} = \eta_{X_{[i]}} \Big) &= \P \Big( \omg^{(n)}_{X_{[i-1]}} = \eta_{X_{[i-1]}} \Big) \P \Big(\omg^{(n)}_{X(i)} = \eta_{X(i)} \big| \, \omg^{(n)}_{X_{[i-1]}} = \eta_{X_{[i-1]}} \Big)\\
&= \mathbf{P} \Big( \eps_{X_{[i-1]}} = \eta_{X_{[i-1]}} \big| \, 0 \leftrightarrow \partial \Lambda_n \Big) \\
& \quad \quad \quad \quad \times \mathbf{P} \Big(\eps_{X(i)} = \eta_{X(i)} \big| \,  \eps_{X_{[i-1]}} = \eta_{X_{[i-1]}}, \, 0 \leftrightarrow \partial \Lambda_n \Big) \\
&= \mathbf{P} \Big( \eps_{X_{[i]}} = \eta_{X_{[i]}} \big| \, 0 \leftrightarrow \partial \Lambda_n \Big) \, .
\end{align*}
It remains to prove $\omg^{(n)} \geq \omg$. Let $x_1, \dots, x_i \in \Lambda$ be distinct and $\eta \in \Omg_{\Lambda}$. We denote by $x_{[i-1]}$ the set $\{x_1, \dots, x_{i-1}\}$. By the FKG inequality (see e.g. \cite[Theorem~2.4]{gri99}), the measure $\mathbf{P}(\cdot | \eps_{x_{[i-1]}} = \eta_{x_{[i-1]}})$ being a product measure and $\{\eps_{x_i} = 1 \}$, $\{0 \leftrightarrow \partial \Lambda_n\}$ increasing events,
\[
\frac{1}{2} \leq \mathbf{P}\Big(\eps_{x_i} = 1 \big| \, \eps_{x_{[i-1]}} = \eta_{x_{[i-1]}}, 0 \leftrightarrow \partial \Lambda_n \Big) \, .
\]
We deduce that, almost surely,
\[
\1 \Big \{ U_{X(i)} \leq \frac{1}{2} \Big\} \leq \1 \Big\{U_{X(i)} \leq \mathbf{P} \big(\eps_{X(i)} = 1 | \, \eps_{X_{[i-1]}} = \omg^{(n)}_{X_{[i-1]}}, 0 \leftrightarrow \partial \Lambda_n\big) \Big\} \, ,
\] and this completes the proof.
\end{proof}

\subsection{Spatial Markov property with respect to black connected cutsets}

An essential tool for constructing the IIC is a property of independence reminiscent of the spatial Markov property of the percolation measure conditioned on having a one-arm. These measures do not satisfy a general spatial Markov property, however, if we suppose that a connected set which cuts $0$ off of $\partial \Lambda_n$ is black, conditioning on having a one-arm from $0$ to radius $n$ is exactly conditioning on having a one-arm from $0$ to the cutset and a one-arm from the cutset to radius $n$, which are events depending on disjoint sets of hexagons. This observation leads to the conditional independence expressed by (16) in \cite{ke86}, or in the upcoming lemma.

Before stating the lemma, we recall briefly what a cutset is, as these will play a central role. We say that a set $\Gamma \subseteq \T$ is a \textbf{cutset} (of $0$) if and only if $0 \notin \Gamma$ and the connected component of $0$ in $\T \backslash \Gamma$ is finite. If $\Gamma$ is a cutset, we call interior of $\Gamma$ and denote by $\inter(\Gamma)$ the connected component of $0$ in $\T \backslash \Gamma$, and we call exterior of $\Gamma$ and denote by $\exter(\Gamma)$ the set $\T \backslash (\inter(\Gamma) \cup \Gamma)$. Given $\Lambda \subseteq \T$ and a connected cutset $\Gamma$, we say $\Lambda$ is inside $\Gamma$ if $\Lambda \subseteq \inter(\Gamma)$.

\begin{lemma} \label{blackcircuitsmp}
Let $n \geq 0$ and let $\Gamma \subseteq \Lambda_n$ be a connected cutset. Let $S$ be a finite set of vertices containing $\Gamma$ and let $\eta \in \Omg_S$ be such that $\Gamma$ is black in $\eta$ and $\eta$ is compatible with $\{0 \leftrightarrow \partial \Lambda_n\}$. Then, for any event $A$ depending only on coordinates inside $\inter(\Gamma)$,
\[
\mathbf{P} \big( A \, | \, \eps_S = \eta, 0 \leftrightarrow \partial \Lambda_n \big) = \mathbf{P} \big( A \, | \, \eps_S = \eta, 0 \leftrightarrow \Gamma \big) = \mathbf{P} \big( A \, | \, \eps_{S \cap \inter(\Gamma)} = \eta_{S \cap \inter(\Gamma)}, 0 \leftrightarrow \Gamma \big) \, .
\]
\end{lemma}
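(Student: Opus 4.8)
The plan is to reduce both identities to the product structure of $\bold{P}$ after recording a few elementary topological facts about the circuit $\Gamma$. Since $\Gamma \subseteq \Lambda_n$ and $\T \setminus \Lambda_n$ is connected and infinite, $\T \setminus \Lambda_n$ lies in the infinite component of $\T \setminus \Gamma$, so $\inter(\Gamma) \subseteq \Lambda_n$ and $\partial \Lambda_n \subseteq \exter(\Gamma)$; moreover, since $\T \setminus \Gamma$ has exactly the two components $\inter(\Gamma)$ and $\exter(\Gamma)$, no edge of $\T$ joins $\inter(\Gamma)$ to $\exter(\Gamma)$, and in particular every neighbour of $0$ lies in $\inter(\Gamma) \cup \Gamma$. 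With these in hand I would establish two structural identities about connection events, as equalities of subsets of $\Omg_\T$. The first, requiring no hypothesis on $\Gamma$, is
\[
\{0 \leftrightarrow \Gamma\} = \{0 \overset{\inter(\Gamma)}{\leftrightarrow} \Gamma\} \, ,
\]
obtained by following a black path witnessing $0 \leftrightarrow \Gamma$ until it first reaches a vertex of $\Gamma$ (if it never does, it already stays inside $\inter(\Gamma)$): by the no-edge fact it cannot leave $\inter(\Gamma)$ except through $\Gamma$, so its last vertex before that lies in $\partial\Gamma \cap \inter(\Gamma)$, giving an interior black arm to $\Gamma$. Consequently $\{0 \leftrightarrow \Gamma\}$ depends only on $(\eps_x)_{x \in \inter(\Gamma)}$. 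The second, valid on the event $\{\eps_\Gamma = \eta_\Gamma\}$ that $\Gamma$ is black, is
\[
\{0 \leftrightarrow \partial \Lambda_n\} \cap \{\eps_S = \eta\} = \{0 \overset{\inter(\Gamma)}{\leftrightarrow} \Gamma\} \cap \{\Gamma \overset{\exter(\Gamma)}{\leftrightarrow} \partial \Lambda_n\} \cap \{\eps_S = \eta\} \, ,
\]
which one checks as follows: for $\subseteq$, cut a black path from $0$ to $\partial\Lambda_n$ at its first and last visits to $\Gamma$ (there is at least one, since the path runs from $\inter(\Gamma)$ to $\exter(\Gamma)$), producing an interior black arm to $\Gamma$ and an exterior black arm from $\Gamma$ to $\partial\Lambda_n$; for $\supseteq$, splice the interior arm, a walk along the all-black circuit $\Gamma$, and the exterior arm. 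The degenerate cases where $0$ is adjacent to $\Gamma$, or $\Gamma$ is adjacent to $\partial\Lambda_n$, are absorbed by the ``declared certain'' convention.

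Granting this, the first equality of the lemma follows from independence. Work under $\bold{P}(\cdot \mid \eps_S = \eta)$, under which the families $(\eps_x)_{x \in \inter(\Gamma) \setminus S}$ and $(\eps_x)_{x \in \exter(\Gamma) \setminus S}$ are independent, each i.i.d.\ $\mathrm{Ber}(\tfrac{1}{2})$; since $\Gamma$ is black in $\eta$, the second structural identity applies. As $A$ depends only on $\inter(\Gamma)$ and $\eps$ is fixed on $S$, the event $A \cap \{0 \overset{\inter(\Gamma)}{\leftrightarrow} \Gamma\}$ is measurable with respect to $(\eps_x)_{x \in \inter(\Gamma) \setminus S}$, while $\{\Gamma \overset{\exter(\Gamma)}{\leftrightarrow} \partial \Lambda_n\}$ is measurable with respect to $(\eps_x)_{x \in \exter(\Gamma) \setminus S}$; both have positive conditional probability because $\eta$ is compatible with $\{0 \leftrightarrow \partial\Lambda_n\}$, which also makes the conditioning legitimate. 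Factorising the numerator and denominator of the ratio defining $\bold{P}(A \mid \eps_S = \eta, 0 \leftrightarrow \partial\Lambda_n)$ by means of the second structural identity and this independence, the common exterior factor $\bold{P}(\Gamma \overset{\exter(\Gamma)}{\leftrightarrow} \partial\Lambda_n \mid \eps_S = \eta)$ cancels, and we are left with
\[
\bold{P}\big( A \mid \eps_S = \eta,\, 0 \leftrightarrow \partial\Lambda_n \big) = \bold{P}\big( A \mid \eps_S = \eta,\, 0 \overset{\inter(\Gamma)}{\leftrightarrow} \Gamma \big) = \bold{P}\big( A \mid \eps_S = \eta,\, 0 \leftrightarrow \Gamma \big) \, ,
\]
the last equality holding because $\{0 \leftrightarrow \Gamma\} = \{0 \overset{\inter(\Gamma)}{\leftrightarrow} \Gamma\}$.

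For the second equality I only need the first structural identity: $A$ and $\{0 \leftrightarrow \Gamma\} = \{0 \overset{\inter(\Gamma)}{\leftrightarrow} \Gamma\}$ are both measurable with respect to $(\eps_x)_{x \in \inter(\Gamma)}$, whereas $\{\eps_S = \eta\}$ fixes $\eps$ on $S$, which beyond $S \cap \inter(\Gamma)$ involves only the disjoint coordinate set $\Gamma \cup (S \cap \exter(\Gamma))$. Since $\bold{P}$ is a product measure, conditioning additionally on those coordinates leaves the conditional law of $(\eps_x)_{x \in \inter(\Gamma)}$ unchanged, so $\bold{P}(A \mid \eps_S = \eta, 0 \leftrightarrow \Gamma) = \bold{P}(A \mid \eps_{S \cap \inter(\Gamma)} = \eta_{S \cap \inter(\Gamma)}, 0 \leftrightarrow \Gamma)$, both conditionings being legitimate again by compatibility of $\eta$. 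I expect the only genuine work to lie in the first paragraph — formulating and verifying the two structural identities, especially the splicing identity for $\{0 \leftrightarrow \partial\Lambda_n\}$ and the care needed around the definition of $\partial\Gamma$ and the short-range ``declared certain'' convention — since once these are available the remainder is merely the independence of disjoint coordinate blocks under $\bold{P}$.
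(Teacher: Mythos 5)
Your proposal is correct and follows essentially the same route as the paper: the splicing identity $\{\eps_S=\eta,\,0\leftrightarrow\partial\Lambda_n\}=\{\eps_S=\eta,\,0\overset{\inter(\Gamma)}{\leftrightarrow}\Gamma,\,\Gamma\overset{\exter(\Gamma)}{\leftrightarrow}\partial\Lambda_n\}$, factorisation by independence of disjoint coordinate blocks with cancellation of the exterior factor, and then the observation that $A$ and $\{0\leftrightarrow\Gamma\}$ depend only on $\inter(\Gamma)$ for the second equality. The only difference is that you spell out the topological facts (no edge between $\inter(\Gamma)$ and $\exter(\Gamma)$, hence $\{0\leftrightarrow\Gamma\}=\{0\overset{\inter(\Gamma)}{\leftrightarrow}\Gamma\}$) that the paper leaves implicit.
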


\begin{proof}[Proof of Lemma \ref{blackcircuitsmp}.]
Let $n \in \N$, $\Gamma \subseteq S \subseteq \Lambda_n$ with $\Gamma$ a connected cutset and let $\eta \in \Omg_S$ be as in the statement of the lemma. First observe
\[
\big\{\eps_S = \eta, 0 \leftrightarrow \partial \Lambda_n \big\} = \big\{\eps_S = \eta, 0 \overset{\inter(\Gamma)}{\leftrightarrow} \Gamma, \Gamma \overset{\exter(\Gamma)}{\leftrightarrow} \partial \Lambda_n \big\} \, .
\]
Indeed, since $\Gamma$ is a cutset of $0$ inside $\Lambda_n$, any black one-arm from $0$ to $\partial \Lambda_n$ connects $0$ to $\Gamma$ and $\Gamma$ to $\partial \Lambda_n$. Conversely, since $\Gamma$ is black in $\eta$ and connected, if $0$ is connected to $\Gamma$ by a black path $\pi_{\inter}$ and $\Gamma$ is connected to $\partial \Lambda_n$ by a black path $\pi_{\exter}$, the union of sets $\pi_{\inter} \cup \Gamma \cup \pi_{\exter}$ is black and connects $0$ to $\partial \Lambda_n$. Now, let $A$ be an event depending only on $\inter(\Gamma)$. By independence of percolation events related to disjoint sets,
\[
\mathbf{P} \big(A \cap \{0 \overset{\inter(\Gamma)}{\leftrightarrow} \Gamma\} \cap \{\Gamma \overset{\exter(\Gamma)}{\leftrightarrow} \partial \Lambda_n\}|\eps_S = \eta \big) = \mathbf{P} \big(A \cap \{0 \overset{\inter(\Gamma)}{\leftrightarrow} \Gamma\}\big| \, \eps_S = \eta \big)\mathbf{P} \big(\Gamma \overset{\exter(\Gamma)}{\leftrightarrow} \partial \Lambda_n \big| \,\eps_S = \eta \big) \, .
\]
By the definition of conditional expectation and applying the above to the certain event $\Omg$ as well, we deduce
\[
\mathbf{P} \big(A \, | \, \eps_S = \eta, 0 \leftrightarrow \partial \Lambda_n \big) = \mathbf{P} \big( A \, | \, \eps_S = \eta, 0 \leftrightarrow \Gamma \big) \, .
\]
Finally, since the events $A$ and $\{0 \leftrightarrow \Gamma\}$ depend only on $\eps_{\inter(\Gamma)}$, we have
\[
\mathbf{P} \big( A \cap \{0 \leftrightarrow \Gamma\} \, | \, \eps_S = \eta_S \big) = \mathbf{P} \big( A \cap \{0 \leftrightarrow \Gamma\}\, | \, \eps_{S \cap \inter(\Gamma)} = \eta_{S \cap \inter(\Gamma)} \big) \, ,
\]
which amounts to the second equality of the lemma after dividing by
\[
\mathbf{P} \big(0 \leftrightarrow \Gamma \, | \, \eps_S = \eta_S \big) = \mathbf{P} \big( 0 \leftrightarrow \Gamma \, | \, \eps_{S \cap \inter(\Gamma)} = \eta_{S \cap \inter(\Gamma)} \big) \, . \qedhere
\]
\end{proof}

\subsection{The boundary of a simply connected set is connected}

Finally, we will need that in $\T$, the boundary of a simply connected set is connected. This would hold as well in any planar triangulation. For general planar graphs, one would need to define the boundary differently: Lemma \ref{boundary of connected is connected} does not hold as such with sites of $\Z^2$ for example.

\begin{lemma} \label{boundary of connected is connected}
Let $\Lambda \subseteq \T$ be a finite, connected set such that $\T \backslash \Lambda$ is connected. Then the set $\p \Lambda$ is connected.
\end{lemma}

In order to prove the above lemma, we use the following version of a fundamental percolation fact, that a topological rectangle coloured in white and black is crossed by a black path from left to right if and only if it is not crossed by a white path from top to bottom. We only state that the absence of a white crossing implies the existence of a black crossing in the other direction. Indeed, this implication is sufficient to prove Lemma~\ref{boundary of connected is connected}, and we cannot state an equivalence unless we enter into more involved topological considerations.

\begin{lemma} \label{fundamental perco fact}
Let $\Gamma_1, \Gamma_2, \Gamma_3, \Gamma_4$ be four self-avoiding, disjoint paths of $\T$ such that the concatenation $\Gamma = \Gamma_1 \sqcup \Gamma_2 \sqcup \Gamma_3 \sqcup \Gamma_4$ is a loop. For any black-and-white colouring of $\T$ such that $\Gamma_1$ and $\Gamma_3$ are black, if there is no white connected component intersecting both $\Gamma_2$ and $\Gamma_4$, then $\Gamma_1$ and $\Gamma_3$ are in the same connected component.
\end{lemma}

We refer to \cite{bolrio06}, (see Chapter~1 and Lemma~5 in Chapter~7) for a more complete exposition about fundamental planar percolation facts, and proofs of results similar to our Lemma~\ref{fundamental perco fact}.

\begin{proof}[Proof of Lemma~\ref{boundary of connected is connected}]
Let $\Lambda \subseteq \T$ be a connected set such that $\T \backslash \Lambda$ is connected. Consider $x, y \in \partial \Lambda$. We prove $x$ and $y$ are connected in $\p \Lambda$. Since $\Lambda$ is connected, there is a path $\Gamma_2$ from a neighbour of $x$ to a neighbour of $y$ in $\Lambda$, and since $\T \backslash \Lambda$ is connected  there is a path $\Gamma_4$ from a neighbour of $y$ to a neighbour of $x$ in $\T \backslash \Lambda$. By loop-erasure, we can assume $\Gamma_2$ and $\Gamma_4$ are self-avoiding. The paths $\Gamma_1 = (x),  \Gamma_2, \Gamma_3 = (y), \Gamma_4$ then satisfy the assumptions of Lemma~\ref{fundamental perco fact}. For each $v \in \T$, colour $v$ in black if $v \in \p \Lambda$ and white otherwise. A path from a vertex of $\Lambda$ to a vertex of $\T \backslash \Lambda$ has a first vertex in $\T \backslash \Lambda$, which must be black. Hence, there is no white connected component intersecting both $\Gamma_2$ and $\Gamma_4$. By Lemma~\ref{fundamental perco fact}, we deduce $x$ and $y$ are in the same black connected component, which implies $x$ and $y$ are connected in $\p \Lambda$.
\end{proof}

\subsection{Proof of Theorem \ref{1armbound}}

We conclude by proving Theorem \ref{1armbound}. It is a formalized version of the sketch of proof given in the introduction, making use of Lemmas \ref{couplingworks}, \ref{blackcircuitsmp} and \ref{boundary of connected is connected}.

\begin{proof}[Proof of Theorem \ref{1armbound}.]
Let $k \leq m \leq n$ and let $(U_x)_{x \in \Lambda_m}$ be iid uniform random variables on $[0, 1]$. From these variables we construct $\omg$, $\omg^{(m)}$ and $\omg^{(n)}$: three percolation configurations in $\Lambda_m$ satisfying that $\omg$ has law $\mathbf{P}$, $\omg^{(m)}$ has law $\mathbf{P}(\cdot | 0 \leftrightarrow \partial \Lambda_m)$, $\omg^{(n)}$ has law $\mathbf{P}(\cdot | 0 \leftrightarrow \partial \Lambda_n)$ and $\omg^{(m)} \geq \omg$ and $\omg^{(n)} \geq \omg$. According to Lemma \ref{couplingworks}, it is sufficient for these properties to hold to introduce an exploration $\{ X(i), 1 \leq i \leq |\Lambda_m| \}$ of the vertices in $\Lambda_m$, associated to $(U_x)_{x \in \Lambda_m}$, then define $\omg$, $\omg^{(m)}$, $\omg^{(n)}$ according to (\ref{omgnfromX}). Additionally, we define $\{X(i), 1 \leq i \leq |\Lambda_m|\}$ carefully so that it reveals the outermost black circuit of $\omg$ before revealing anything in its interior, as follows. For $1 \leq i \leq |\Lambda_m|$, let $E_{i-1}$ be the set of sites of $\Lambda_m \backslash X_{[i-1]}$  connected by a white path to $\partial \Lambda_m$ in $\omg_{X_{[i-1]}}$. We choose the next site to reveal $X(i)$ by picking sites in $E_{i-1}$ in priority: if $E_{i-1}$ is non-empty, pick $X(i)$ arbitrarily in $E_{i-1}$, otherwise pick $X(i)$ arbitrarily in the set of unrevealed sites $\Lambda_m \backslash X_{[i-1]}$ (here `arbitrarily in $E_{i-1}$' means any choice of $X(i) \in E_{i-1}$ that is deterministic in $X_{[i-1]}$ and $(U_{X(j)})_{j \leq i - 1}$ is satisfactory). Note that $E_{i-1}$ also depends only on $X_{[i-1]}$ and $(U_{X(j)})_{j \leq i-1}$ so that this is well-defined and $X$ then satisfies the conditions of Lemma \ref{couplingworks}.

Our goal is to prove that $\omg^{(m)}$ and $\omg^{(n)}$ agree in $\Lambda_k$ as soon as $\omg \notin \{ \Lambda_k \leftrightarrow^* \partial \Lambda_m\}$. Define the stopping-time $\tau$ by
\[
\tau := \min \big\{ 0 \leq i \leq |\Lambda_m| : E_i = \emptyset \big\} \, . 
\]
Hence, $\tau$ is the first time in the exploration at which no unrevealed site is connected to $\partial \Lambda_m$ by a white path in $\omg_{X[\tau]}$. From now on, assume the event $\{X_{[\tau]} \cap \Lambda_k = \emptyset\}$. This event is exactly the event that the white connected component of $\partial \Lambda_m$ in $\omg$ and its black boundary have been revealed before the exploration has reached $\Lambda_k$, so that
\begin{equation} \label{equalityofevents}
\big\{X_{[\tau]}  \cap \Lambda_k \neq \emptyset \big\} = \big\{\Lambda_k \leftrightarrow^* \partial \Lambda_m \text{ in } \omg\big\} \, .
\end{equation}
Let $C$ be the connected component of $0$ in $\Lambda_m \backslash X_{[\tau]}$. By convention, a site neighbor to $\partial \Lambda_m$ is always said to be connected by a white path to $\partial \Lambda_m$, so that every vertex in $\p \Lambda_{m-1}$ is revealed before $\tau$. Hence $C \cap \Lambda_{m-1}$ is empty and therefore $\p C$ is a cutset and is inside $\Lambda_m$. The set $C$ is connected by definition as a connected component. Moreover, $\T \backslash C$ is connected because any vertex in $\T \backslash C$ is connected to $\partial \Lambda_m$. By Lemma \ref{boundary of connected is connected}, we deduce $\p C$ is connected. By maximality of $C$, every $x \in \p C$ is in $X_{[\tau]}$, and $C = \inter(\p C)$ by definition of the interior of a cutset. Since we assumed $\{X_{[\tau]} \cap \Lambda_k = \emptyset\}$, we have that $\p C$ is around $\Lambda_k$. We thus have proved $\p C$ is a connected cutset and $\Lambda_k$ is inside $\p C$. Now remark that $\p C$ is black in $\omg_{X_{[\tau]}}$ by minimality of $\tau$. By the second part of Lemma \ref{couplingworks}, $\omg^{(m)} \geq \omg$ and $\omg^{(n)} \geq \omg$ so that every site in $\p C$ is black as well in $\omg^{(m)}$ and in $\omg^{(n)}$.

We now prove by induction on $i \geq \tau$ that for all $x \in X_{[i]} \cap C$, $\omg^{(m)}_x= \omg^{(n)}_x$. The set $X_{[i]} \cap C$ is indeed empty at $i = \tau$, and for $i \geq \tau + 1$, by the first equality in Lemma~\ref{blackcircuitsmp} applied to $\Gamma = \p C$,
\[
\mathbf{P} \Big( \eps_{X(i)} =1 \big| \, \eps_{X_{[i-1]}} = \omg_{X_{[i-1]}}^{(m)}, 0 \leftrightarrow \partial \Lambda_m \Big) = \mathbf{P} \Big(\eps_{X(i)} = 1 \big| \, \eps_{X_{[i-1]}} = \omg_{X_{[i-1]}}^{(m)}, 0 \leftrightarrow \Gamma \Big) \, ,
\]
keeping in mind that $X$, $\omg^{(m)}$ and $\p C$ are treated as deterministic when using $\mathbf{P}$, although these are random under $\P$. By the second equality in Lemma~\ref{blackcircuitsmp}, we can replace the conditioning on $\{\eps_{X_{[i-1]}} = \omg_{X_{[i-1]}}^{(m)}\}$ by a conditioning on $\{\eps_{X_{[i-1]} \cap C} = \omg_{X_{[i-1]} \cap C}^{(m)}\}$. Under the induction hypothesis that $\omg^{(n)} = \omg^{(m)}$ on $X_{[i-1]} \cap C$, we deduce
\[
\mathbf{P} \Big( \eps_{X(i)} =1 \big| \,  \eps_{X_{[i-1]}} = \omg_{X_{[i-1]}}^{(m)}, 0 \leftrightarrow \partial \Lambda_m \Big) = \mathbf{P} \Big(\eps_{X(i)} = 1 \big| \, \eps_{X_{[i-1]}} = \omg_{X_{[i-1]}}^{(n)}, 0 \leftrightarrow \partial \Lambda_n \Big) \, .
\]
By (\ref{omgnfromX}), we deduce that if $\omg^{(n)} = \omg^{(m)}$ on $X_{[i-1]} \cap C$ and if $X(i) \in C$ then $\omg^{(n)}_{X(i)} = \omg^{(m)}_{X(i)}$, which concludes the induction. Therefore, on the event $\{X_{[\tau]} \cap \Lambda_k = \emptyset\}$, $\omg^{(m)}$ and $\omg^{(n)}$ coincide in $\Lambda_k$, so that
\[
\big\{\omg_{\Lambda_k}^{(m)} \neq \omg_{\Lambda_k}^{(n)} \big\} \subseteq \big\{ X_{[\tau]}  \cap \Lambda_k \neq \emptyset \big\} \overset{(\ref{equalityofevents})}{=} \big\{ \Lambda_k \leftrightarrow^* \partial \Lambda_n  \big\}\, .
\]
Finally, if $E$ is an event depending only on $\Lambda_k$,
\[
\Big|\mathbf{P} \big(E \, | \, 0 \leftrightarrow \partial \Lambda_m \big) - \mathbf{P} \big(E \, | \, 0 \leftrightarrow \partial \Lambda_n \big) \Big| \leq \P \big(\omg^{(m)}_{\Lambda_k} \neq \omg^{(n)}_{\Lambda_k} \big)  \leq \P \big(\Lambda_k \leftrightarrow^* \partial \Lambda_n \big) \, . \qedhere
\]
\end{proof}

\section{Acknowledgments}
The author would like to thank Hugo Vanneuville and Vincent Beffara for instructive discussions and detailed comments on the manuscript, and Daniel de la Riva and Christophe Garban for private communications and remarks at an early stage of the project. The author would like to acknowledge Christophe Garban for communicating the unpublished notes by Schramm \cite{sch}.

\bibliographystyle{amsalpha}
\bibliography{iicbiblio}

\noindent
Malo Hillairet:\\
Institut Fourier, UMR 5582, Laboratoire de Mathématiques, Université Grenoble Alpes, CS 40700, 38058 Grenoble cedex 9, France\\
Email: \href{mailto:malo.hillairet@univ-grenoble-alpes.fr}{malo.hillairet@univ-grenoble-alpes.fr} \\
Url: \href{https://www-fourier.univ-grenoble-alpes.fr/~hillairm/}{https://www-fourier.univ-grenoble-alpes.fr/~hillairm/}

\end{document}